\newcommand*{\rom}[1]{\expandafter\@slowromancap\romannumeral #1@}
\newcommand{\bs}{\boldsymbol}
\newtheorem{theorem}{Theorem}
\newtheorem{lemma}{Lemma}
\newtheorem{corollary}{Corollary}
\newtheorem{remark}{Remark}
\begin{document}
\title{A General Analytical Approach for Outage Analysis of HARQ-IR over Correlated Fading Channels}
\author[$\ddag$]{Zheng~Shi}
\author[$\dag$]{Shaodan~Ma}
\author[$\ddag$]{Guanghua~Yang}
\author[$\dag$]{Kam Weng Tam}
\author[$\S$]{Ming-Hua~Xia}
\affil[$\ddag$]{The School
of Electrical and Information Engineering and Institute of Physical Internet, Jinan University, China}
\affil[$\dag$]{Department of Electrical and Computer Engineering, University of Macau, Macau}
\affil[$\S$]{School of Electronics and Information Technology, Sun Yat-Sen University, Guangzhou, China.}
% with Institute of Physical Internet, Jinan University (ghyang@jnu.edu.cn)
%\author{Zheng~Shi,
%%        Haichuan~Ding,
%        Shaodan~Ma,
%        Guanghua~Yang,
%        Kam-Weng~Tam,
%        and Minghua Xia
%%\thanks{Manuscript received January 14, 2015; revised May 29, 2015; accepted July 19, 2015. The associate editor coordinating the review of this paper and approving it for publication was M. Elkashlan.}
%%\thanks{Zheng Shi, Shaodan Ma and Kam-Weng Tam are with the Department of Electrical and Computer Engineering, University of Macau, Macao (e-mail:shizheng0124@gmail.com, shaodanma@umac.mo, kentam@umac.mo).}
%%\thanks{Haichuan Ding was with University of Macau, and is now with the Department of Electrical and Computer Engineering, University of Florida, U.S.A. (email: dhcbit@gmail.com).}
%%\thanks{Su Pan is with Nanjing University of Posts and Telecommunications, China (email: supan@njupt.edu.cn).}
%%\thanks{The corresponding author is Shaodan Ma.}
%%\thanks{This work was supported by the Research Committee of University of Macau under grants: MYRG078(Y1-L2)-FST12-MSD and MYRG101(Y1-L3)-FST13-MSD.}
%}
%\affil{Department of Electrical and Computer Engineering, University of Macau, Macau}

\maketitle
\begin{abstract}
This paper proposes a general analytical approach to derive the outage probability of hybrid automatic repeat request with incremental redundancy (HARQ-IR) over correlated fading channels in closed-form. Unlike prior analyses, the consideration of channel correlation is one of the key reasons making the outage analysis involved. Using conditional Mellin transform, the outage probability of HARQ-IR over \textit{correlated Rayleigh} fading channels is exactly expressed as a mixture of outage probabilities of HARQ-IR over \textit{independent Nakagami} fading channels, where the weights are negative multinomial probabilities. Its straightforward application is to conduct asymptotic outage analysis to gain more insights, in which the asymptotic outage probability is obtained in a concise form. The asymptotic outage probability possesses some special properties which ease optimal power allocation and rate selection of HARQ-IR. Finally, numerical results are presented for validations and discussions.

%%It is revealed that low time correlation is beneficial to the outage performance and full time diversity can be achieved even under time-correlated fading channels.

%Due to the infinite series representation of outage probability, an efficient truncation approach is proposed to enable its computation. The analytical result, of independent interest, facilitates the asymptotic outage analysis. Meaningful insights can be extracted from the asymptotic outage probability, which clearly shows the impacts of time correlation, transmission rate and transmit powers. It is also revealed that low time correlation is beneficial to the outage performance, and full time diversity can be achieved even under time-correlated fading channels. Remarkably, these analytical results can serve as a solid foundation for optimal system design and performance evaluation. Finally, numerical results are presented for validations and discussions.
\end{abstract}
% Note that keywords are not normally used for peerreview papers.
\begin{IEEEkeywords}
hybrid automatic repeat request with incremental redundancy, correlated Rayleigh fading, Mellin transform.
\end{IEEEkeywords}
\IEEEpeerreviewmaketitle

\section{Introduction}\label{sec:int}
In the last decade we have witnessed an ever increasing research interest in hybrid automatic repeat request (HARQ) owing to its high potential for reliable reception. It has been rigorously proved in \cite{caire2001throughput} that HARQ with incremental redundancy (HARQ-IR) can reach the ergodic capacity in Gaussian collision channels. Moreover, HARQ-IR outperforms other types of HARQ (i.e., Type I HARQ and HARQ with chase combining) since extra coding gain is achieved. %Since HARQ-IR accumulates mutual information from each HARQ round, the accumulated mutual information is then given by a sum of mutual information.

%The analysis of HARQ-IR is of great importance for system design and has been widely investigated in the literature. As shown in \cite{makki2014performance}, the analysis of HARQ-IR is essentially to determine cumulative distribution function (CDF) of accumulated mutual information. In fact, the CDF of the accumulated mutual information can be directly applied to obtain the most fundamental performance metric of HARQ, i.e., outage probability.

Although HARQ-IR has been studied extensively in the literature, most of them assume either quasi-static \cite{shen2009average,makki2013green} or independent fading channels \cite{to2015power,jinho2013energy,yilmaz2009productshifted,
chelli2013performance,larsson2014throughput}. For example, \cite{shen2009average} and \cite{makki2013green} conduct average rate performance analysis and power optimization for HARQ-IR, respectively, under quasi-static fading channels. The quasi-static fading channels assume the same channel realization experienced in all HARQ rounds. However, the analytical results under quasi-static fading channels only apply to low mobility environment. In contrast, the transmitted signals in high mobility environment usually undergo independent fading channels, where the channels are assumed to vary independently from one transmission to another. In this scenario, the analysis becomes to determine the cumulative distribution function (CDF) of the product of multiple shifted independent random variables (RVs). Several approaches have been developed to handle this problem in the literature. For example, in \cite{to2015power}, Log-normal approximation is proposed based on central limit theorem (CLT). Nevertheless, this approximation can not ease the optimal system design because of the complex results. Moreover, \cite{jinho2013energy} adopts Jensen's inequality to obtain the lower bound of the CDF to reduce the computational complexity. In order to obtain the exact CDF, two methods are developed, i.e., Mellin transform \cite{yilmaz2009productshifted,chelli2013performance} and multi-fold convolution \cite{larsson2014throughput}. Unfortunately, the analytical results obtained from the two methods are too complicated to apply to optimal design.

In addition, another widely adopted channel model is correlated fading channel, which usually occurs in low-to-medium mobility environment. %Under time-correlated fading channels, the corresponding CDF of accumulated mutual information becomes more challenging because of involving a product of multiple shifted correlated RVs. It is also totally different from the analysis of HARQ-CC over time-correlated fading channels in \cite{kim2011optimal,jin2011optimal}, where a sum of multiple correlated RVs is concerned.
To our best knowledge, only few approximation approaches have been proposed to analyze the performance of HARQ-IR over time-correlated fading channels, i.e., Log-normal approximation \cite{yang2014performance}, polynomial fitting technique \cite{shi2015analysis}, inverse moment matching method \cite{shi2016inverse} and Jensen's inequality \cite{shi2016optimal}. Unfortunately, Log-normal approximation in \cite{yang2014performance} is inaccurate under fading channels with medium-to-high correlation. To improve the approximation accuracy, other two approximation approaches are developed in \cite{shi2015analysis,shi2016inverse}. However, the proposed approximation approaches admit a mean square error (MSE) convergence, which does not necessarily imply that the approximation error would approach to zero at every point \cite[p86]{adams2013continuous}. Particularly, the proposed approximation approaches become unstable when evaluating a low outage probability, which limits their application to conducting the asymptotic outage analysis to further extract insightful results. Besides, the analytical results in \cite{shi2015analysis,shi2016inverse} are still too complicated to apply to practical system design. To avoid that, a tractable approximate expression is derived for outage probability based on Jensen's inequality, and is then applied to obtain optimal transmission powers in closed-form in \cite{shi2016optimal}. However, there exists an evident gap between the approximate and the exact optimal solutions.

%\cite{makki2014performance} demonstrates that the essential parameter to evaluate HARQ-IR is the accumulated mutual information, with whose cumulative distribution function (CDF) can be directly applied to obtaining the most fundamental performance metric of HARQ, i.e., outage probability. It is proved that the derivation of the CDF of accumulated mutual information is equivalent to determining the CDF of a product of multiple shifted random variables (RVs). Specifically, for Rayleigh fading channels, these RVs comply with exponential distribution.

In this paper, a general analytical approach is proposed to derive the outage probability of HARQ-IR over exponentially correlated Rayleigh fading channels. By using conditional Mellin transform, the outage probability is exactly derived as a representation of a mixture of outage probabilities of HARQ-IR over independent Nakagami fading channels where the weights are negative multinomial probabilities. The analytical result, of independent interest, facilitates asymptotic outage analysis. Specifically, the asymptotic outage probability is derived in a compact form which clearly quantifies the impacts of channel time correlation, transmission rate and powers. %It is revealed that the time correlation of fading channels has a detrimental effect on system performance, while full time diversity can be achieved by HARQ-IR even under time-correlated fading channels.
Moreover, the special properties of asymptotic outage probability facilitate the optimal system design.

The rest of this paper is structured as follows. Section \ref{sec:sys_mod} introduces the system model and formulates the outage probability of HARQ-IR. Exact and asymptotic outage analyses are conducted in Section \ref{sec:exc} and \ref{sec:asy}, respectively. Section \ref{sec:num_res} verifies our numerical results. Finally, some conclusions are drawn in Section \ref{sec:con}.

\section{System Model and Outage Formulation}\label{sec:sys_mod}
Due to space limitation, this paper considers a point-to-point HARQ-IR enabled system operating over exponentially correlated Rayleigh fading channels. It is worth emphasizing that the general analytical approach proposed in this paper can be easily applied to more complicated HARQ-IR system. To facilitate outage formulation, HARQ-IR protocol and time-correlated fading channels are first introduced.
\subsection{HARQ-IR protocol}\label{sec:harq_ir}
Following HARQ-IR, each original information message is first encoded into $K$ sub-codewords at the source, where $K$ denotes the maximum allowable number of transmissions for each message. These $K$ sub-codewords will be delivered one by one until the message is successfully decoded. In each transmission, all the erroneously received sub-codewords are combined with currently received sub-codewords for joint decoding. If success, an acknowledgement (ACK) message is fed back to the source and the transmission for the next information message will be initiated. Otherwise, a negative acknowledgement (NACK) message is fed back to notify the source, and the next sub-codeword is transmitted until the maximum number of transmissions $K$ is reached. %Here we assume an error-free feedback channel, that is, all feedback signals can be correctly decoded.
\subsection{Time-correlated Rayleigh fading channels}\label{sec:corr}
Denote ${\bf x}_k$ as the $k$th sub-codeword of length $L$. We assume a block fading channel, %which means that each symbol of ${\bf x}_k$ experiences an identical channel realization during the $k$th transmission. Accrodingly,
the signal received in the $k$th transmission is thus given by
\begin{equation}\label{eqn:sign_mod}
  {\bf y}_k = h_k {\bf x}_k + {\bf n}_k,
\end{equation}
where ${\bf n}_k$ denotes a complex additive white Gaussian noise (AWGN) vector with mean vector ${\bf 0}_L$ and covariance matrix ${\bf I}_{L}$, i.e., ${\bf n}_k \sim {\cal CN}(0,{{\bf I}_{L}})$, ${\bf I}_{L}$ represents an identity matrix, $h_k$ denotes the block Rayleigh fading channel coefficient in the $k$th transmission%, i.e., the magnitude of $h_k$ follows a Rayleigh distribution
. In this paper, time correlation of fading channels is considered. A commonly adopted correlated Rayleigh fading channel model is given as \cite{kim2011optimal}
\begin{equation}\label{eqn:R_k_def}
{h_k} =  {{{{\sigma _k}}}} \left( {\sqrt {1 - {\rho ^{2\left( {k + \delta  - 1} \right)}}} {\alpha _k} + {\rho ^{k + \delta  - 1}}{\alpha _0}} \right),
\end{equation}
where $\rho$, $\delta$ and ${\sigma_k}^2$ denote the time correlation, the channel feedback delay and the mean squared magnitude of ${h_k}$, respectively; ${\alpha_{1}},\cdots,{\alpha_{K}}$ and ${\alpha_{0}}$ are independent circularly-symmetric complex normal RVs with zero mean and unit variance, i.e., ${\alpha_{0}}, {\alpha_{k}} \sim \mathcal{CN}\left( {0,1} \right)$.
%
% Add a footnote to mention the correlation matrix of fading channels.
%
%Specifically, the correlation coefficient between the squared channel amplitudes $|h_k|^2$ and $|h_l|^2$ as
%\begin{align}
%\label{eqn_cor_fa}
%{\rho _{l,k}} &= \frac{{{\rm{E}}\left( {{{\left| {{h_{l}}} \right|}^2}{{\left| {{h_{k}}} \right|}^2}} \right) - {\rm{E}}\left( {{{\left| {{h_{l}}} \right|}^2}} \right){\rm{E}}\left( {{{\left| {{h_{k}}} \right|}^2}} \right)}}{{\sqrt {{\rm{Var}}\left( {{{\left| {{h_{l}}} \right|}^2}} \right){\rm{Var}}\left( {{{\left| {{h_{k}}} \right|}^2}} \right)} }} = {\lambda _{l}}^2{\lambda _{k}}^2,\, 1 \le l \ne k \le K,
%\end{align}
%where ${\rm Var}(\cdot)$ denotes the operation of variance.

%It is worth noting that the correlated channel model generalizes quasi-static fading channels and fast fading channels as its special cases. More specifically, $|{\boldsymbol{\lambda }}| = {\bf 1}_K$ indicates that the channel magnitudes are fully correlated, namely, quasi-static fading channels, where ${\bf 1}_K$ denotes an all-ones vector of length $K$.
%While for the case of $|{\boldsymbol{\lambda }}| = {\bf 0}_K$, the channel magnitudes are independent and Nakagami-m distributed, namely, fast fading channels, where ${\bf 0}_K$ is a null vector of length $K$. Without a loss of generality, index $K$ is omitted in the remainder.

According to (\ref{eqn:sign_mod}), the received signal-to-noise ratio (SNR) in the $k$th transmission is% given by
\begin{equation}\label{eqn:SNR}
{\gamma _k} = {P_k}{\left| {{h_k}} \right|^2}{\rm{ = }}{P_k}{\sigma _k}^2{\left| {\sqrt {1 - {\rho ^{2\left( {k + \delta  - 1} \right)}}} {\alpha _k} + {\rho ^{k + \delta  - 1}}{\alpha _0}} \right|^2},
\end{equation}
where $P_k$ is the transmitted signal power in the $k$th transmission. Since the channel magnitude $\left| {{h_k}} \right|$ is Rayleigh distributed, ${\gamma _k} $ complies with exponential distribution with mean $P_k{\sigma_k}^2$. Due to the correlation between fading channels, the SNRs $\bs \gamma = (\gamma _1,\gamma _2,\cdots,\gamma _K)$ are thus correlated exponential RVs.
%To proceed with our analysis, the joint probability density function (PDF) of correlated SNRs $\bs \gamma$ will be first derived.

\subsection{Outage Formulation}
Outage probability is proved as the most fundamental performance metric of HARQ schemes \cite{caire2001throughput}. %For HARQ-IR,  the outage probability is directly determined by the CDF of accumulated mutual information.
Specifically, assuming that information-theoretic capacity achieving channel coding is adopted for HARQ-IR, an outage event happens when the accumulated mutual information $I_K$ is below the transmission rate $\mathcal R$. The outage probability after $K$ transmissions is thus written as
\begin{equation}\label{eqn:out_prob_def}
{p_{out,K}} = \Pr \left( {{I_K}  < \mathcal R} \right) = F_{{I_K}}(\mathcal R),
\end{equation}
where $F_{{I_K}}(\cdot)$ denotes the CDF of ${I_K}$. Following the HARQ-IR protocol, the accumulated mutual information after $K$ transmissions is given by
\begin{equation}\label{eqn:def:accum_inf}
{I_K} = \sum\limits_{k = 1}^K {{{\log }_2}\left( {1 + {\gamma _k}} \right)}.
\end{equation}
Accordingly, the outage probability ${p_{out,K}}$ becomes as
\begin{equation}\label{eqn:out_prob_rew}
{p_{out,K}} = \Pr \left( {G \triangleq \prod\limits_{k = 1}^K {\left( {1 + {\gamma _k}} \right)}  < {2^{\mathcal R}}} \right) = {F_G}\left( {{2^{\mathcal R}}} \right),
\end{equation}
where $F_G(\cdot)$ denotes the CDF of $G$. Therefore, noting that $\bs \gamma$ are correlated exponential RVs, the derivation of outage probability ${p_{out,K}}$ essentially turns to determining the CDF of the product of multiple shifted correlated-exponential RVs, i.e., ${F_G}\left( {{x}} \right)$.

%As aforementioned, the essential parameter to characterize the performance of HARQ-IR is the CDF of accumulated mutual information. Following HARQ-IR protocol, the accumulated mutual information after $K$ transmissions is given by
%\begin{equation}\label{eqn:def:accum_inf}
%{I_K} = \sum\limits_{k = 1}^K {{{\log }_2}\left( {1 + {\gamma _k}} \right)}.
%\end{equation}
%
%As the most fundamental performance metric of HARQ schemes, the outage probability of HARQ-IR is directly determined by the CDF of accumulated mutual information \cite{caire2001throughput}. Specifically, assuming that information-theoretic capacity achieving channel coding is adopted for HARQ-IR, an outage event happens when the accumulated mutual information is below the transmission rate $\mathcal R$. The outage probability after $K$ transmissions is thus given by
%\begin{equation}\label{eqn:out_prob_def}
%{p_{out,K}} = \Pr \left( {{I_K}  < \mathcal R} \right) = F_{{I_K}}(\mathcal R).
%\end{equation}
%where $F_{{I_K}}(\cdot)$ denotes the CDF of ${I_K}$. Hereby, to derive ${p_{out,K}}$, it amounts to determining
%\begin{equation}\label{eqn:out_prob_rew}
%{p_{out,K}} = \Pr \left( {G \triangleq \prod\limits_{k = 1}^K {\left( {1 + {\gamma _k}} \right)}  < {2^{\mathcal R}}} \right) = {F_G}\left( {{2^{\mathcal R}}} \right),
%\end{equation}
%where $G$ represents the product of multiple shifted correlated-exponential RVs and $F_G(\cdot)$ denotes the CDF of $G$. %Then the derivation of outage probability essentially turns to determine the CDF of $G$.
\section{Exact Outage Analysis}\label{sec:exc}
%\subsection{The CDF of $G$}
Unlike the outage analysis under independent fading channels \cite{to2015power,jinho2013energy,yilmaz2009productshifted,
chelli2013performance,larsson2014throughput}, it is more challenging to derive an exact expression for outage probability ${p_{out,K}}$ when fading channels are time-correlated. Recall that the derivation of ${p_{out,K}}$ essentially turns to determining the CDF of the product of multiple shifted correlated-exponential RVs, i.e., $F_G(x)$. To the best of our knowledge, there is no readily available method to solve the problem. Nevertheless, it is found from (\ref{eqn:SNR}) that correlation among the SNRs can be eliminated given the RV ${\alpha_{0}}$, i.e., the SNRs are independent when ${\alpha_{0}}$ is given. This conditional independence of SNRs inspires us to derive the CDF of $F_G(x)$ using conditional Mellin transform. Specifically, it can be proved by using \cite[Theorem 1.3.4]{muirhead2009aspects} that $\bs \gamma$ follows independent noncentral chi-squared distributions with $2$ degrees-of-freedom conditioned on $T = {{{\left| {\alpha_{0}} \right|}^2}} $. The conditional PDF of the SNR $\gamma_k$ given $T$ is given by
\begin{multline}\label{eqn:cond_pdf_r_k_rew}
{{f_{\left. {{\gamma _k}} \right|T}}\left( {\left. {{x_k}} \right|T} \right)} = \frac{1}{\Omega_k}{\exp{\left( - \frac{{{x_k} + {P_k}{\sigma _k}^2{\rho ^{2\left( {k + \delta  - 1} \right)}}T}}{\Omega_k}\right)}} \times\\
{}_0{F_1}\left( {;1;\frac{{{\rho ^{2\left( {k + \delta  - 1} \right)}}T{x_k}}}{{\left( {1 - {\rho ^{2\left( {k + \delta  - 1} \right)}}} \right)\Omega_k}}} \right),\, |\rho| \ne 1,
\end{multline}
where ${}_0F_1(\cdot)$ denotes the confluent hypergeometric limit function \cite[Eq.9.14.1]{gradshteyn1965table}, and $ \Omega_k={{P_k}{\sigma _k}^2{{\left( {1 - {\rho ^{2\left( {k + \delta  - 1} \right)}}} \right)}}}$.

Noticing that Mellin transform is an efficient way to obtain the distribution of a product of multiple independent RVs, the conditional Mellin transform is proposed  to derive the conditional probability density function (PDF) of $G$ given $T$ as follows.

When given $T$, the Mellin transform with respect to the conditional PDF of $G$, ${f_{\left. G \right|T}}(x|t)$, can be written as
\begin{multline}\label{eqn_mellin_Rk1}
\left\{ {\mathcal M{f_{\left. G \right|T}}} \right\}\left( s \right) = {\rm E}\left\{ {\left. {{G^{s - 1}}} \right|T=t} \right\} \triangleq \phi \left( {\left. s \right|t} \right) \\
= \sum\limits_{{l_1}, \cdots ,{l_K} = 0}^\infty  {{t^{\sum\limits_{k = 1}^K {{l_k}} }}{e^{ - t\sum\limits_{k = 1}^K {\frac{{{\rho^{2(k+\delta-1)}}}}{{1 - {\rho^{2(k+\delta-1)}}}}} }}\prod\limits_{k = 1}^K {\frac{{{{\left( {\frac{{{\rho^{2(k+\delta-1)}}}}{{1 - {\rho^{2(k+\delta-1)}}}}} \right)}^{{l_k}}}}}{{{l_k}!{{ {\Omega_k} }^{1 + {l_k}}}}}} }\\
\times \Psi \left( {1 + {l_k},1 + {l_k} + s;\frac{1}{{\Omega_k}}} \right).
\end{multline}
where $\Psi \left( \cdot \right)$ denotes Tricomi's confluent hypergeometric function \cite[9.211,4]{gradshteyn1965table}.

Accordingly, by using inverse Mellin transform, $f_{\left. G \right|T}(x|t)$ can be written as \cite{debnath2010integral}
\begin{equation}\label{eqn_pdf_G_on_T}
{f_{\left. G \right|T}}\left( {\left. x \right|t} \right) = \left\{ {{\mathcal M^{ - 1}}\phi } \right\}\left( s \right) = \frac{1}{{2\pi \rm i}}\int\nolimits_{c - {\rm i}\infty }^{c + {\rm i}\infty } {{x^{ - s}}} \phi \left( {\left. s \right|t} \right)ds.
\end{equation}
where ${\rm i} = \sqrt{-1}$. Putting (\ref{eqn_mellin_Rk1}) into (\ref{eqn_pdf_G_on_T}), it yields
\begin{multline}\label{eqn:g_hat_cond_t}
{f_{\left. G \right|T}}\left( {\left. x \right|t} \right) = \sum\limits_{{l_1}, \cdots ,{l_K} = 0}^\infty  {{t^{\sum\limits_{k = 1}^K {{l_k}} }}{e^{ - t\sum\limits_{k = 1}^K {\frac{{{\rho ^{2(k + \delta  - 1)}}}}{{1 - {\rho ^{2(k + \delta  - 1)}}}}} }}}  \\
\times \prod\limits_{k = 1}^K {\frac{{{{\left( {\frac{{{\rho ^{2(k + \delta  - 1)}}}}{{1 - {\rho ^{2(k + \delta  - 1)}}}}} \right)}^{{l_k}}}}}{{{l_k}!}}} {f_{{{\cal A}_{\bf{l}}}}}(x),
\end{multline}
where ${\bf l} = (l_1,l_2,\cdots,l_K)$ and ${{f_{{{ {\cal A}}_{\bf{l}}}}}(x)}$ denotes the PDF of the product of $K$ independent shifted-Gamma RVs, i.e., $\mathcal A_{\bf l} = \prod\nolimits_{k = 1}^K (1+R_{{\bf l},k})$, and $R_{{\bf l},k} \sim \mathcal {G}(1+l_k,\Omega_k)$. ${{f_{{{ {\cal A}}_{\bf{l}}}}}(x)}$ is explicitly given by (\ref{eqn:pdf_produc_gamma_ind_sec}), as shown at the top of next page,
\begin{figure*}[!t]
%\normalsize
\begin{align}\label{eqn:pdf_produc_gamma_ind_sec}
{f_{{{ {\mathcal A}}_{\bf{l}}}}}(x) %= \frac{1}{{2\pi {\rm{i}}}}\int_{c - {\rm{i}}\infty }^{c + {\rm{i}}\infty } {\prod\limits_{k = 1}^K {\frac{{\Psi \left( {m + {l_k},m + {l_k} + s;\frac{1}{{{\Omega _k}}}} \right)}}{{{{\left( {{\Omega _k}} \right)}^{m + {l_k}}}}}} {x^{ - s}}ds} \\
= \frac{1}{{\prod\limits_{k = 1}^K {\Omega_k} }}Y_{0,K}^{K,0}\left[ {\left. {\begin{array}{*{20}{c}}
 - \\
{\left( {0,1,\frac{1}{{{\Omega _1}}},1 + {l_1}} \right), \cdots ,\left( {0,1,\frac{1}{{{\Omega _K}}},1 + {l_K}} \right)}
\end{array}} \right|\frac{x}{{\prod\limits_{k = 1}^K {\Omega_k} }}} \right].
\end{align}
%\hrulefill
%\vspace*{4pt}
\end{figure*}
where $Y_{p,q}^{m,n}[\cdot]$ defines generalized Fox's H function \cite{yilmaz2009productshifted,chelli2013performance}.

With (\ref{eqn:g_hat_cond_t}), the PDF of $G$ can then be obtained by averaging the conditional probability ${f_{\left. G \right|T}}\left( {\left. x \right|t} \right)$ over the distribution of $T$, such that
\begin{equation}\label{eqn:f_G_PDF}
{f_G}\left( x \right){\rm{ = }}\int_0^\infty  {{f_{\left. G \right|T}}\left( {\left. x \right|t} \right){f_T}\left( t \right)dt},
\end{equation}
where ${f_T}\left( t \right)$ denotes the PDF of $T$. Since ${\alpha_{0}}$ is a complex Gaussian RV with zero mean and unit variance, $T = {{{\left| {\alpha_{0}} \right|}^2}} $ obeys a exponential distribution with the PDF of
\begin{equation}\label{eqn_pdf_T_nak}
{f_T}\left( t \right) = {e^{ - t}},\, t \ge 0.
\end{equation}

Substituting (\ref{eqn:g_hat_cond_t}) and (\ref{eqn_pdf_T_nak}) into (\ref{eqn:f_G_PDF}), the following theorem can be obtained after some mathematical manipulations.
\begin{theorem}\label{the:cdf_pdf_corr_gam_shif}
The PDF of $ G$ can be expressed as a mixture of PDFs of products of independent shifted-Gamma RVs, such that
\begin{equation}\label{eqn:f_G_def_sec}
{f_G}\left( x \right) = {{\rm{E}}_{\bf{l}}}\left\{ {{f_{{\mathcal A_{\bf{l}}}}}(x)}\right\} = \sum\limits_{{{\bf l}} \in {{\mathbb N}_0}^K}  {{W_{\bf{l}}}{f_{{\mathcal A_{\bf{l}}}}}(x)} ,
\end{equation}
where ${{\mathbb N}_0}^K$ denotes $K$-ary Cartesian power of natural number set; ${\bf l}$ follows a negative multinomial distribution whose probability mass function (pmf) is
\begin{multline}\label{eqn:W_l_def_sec}
{W_{\bf{l}}} \triangleq \Pr \left( {{\bf{l}} = \left( {{l_1},{l_2}, \cdots ,{l_K}} \right)} \right)  = \\
{\left( {1 + \sum\limits_{k = 1}^K {\frac{{{\rho ^{2\left( {k + \delta  - 1} \right)}}}}{{1 - {\rho ^{2\left( {k + \delta  - 1} \right)}}}}} } \right)^{ - 1}}{\left({\sum\limits_{k = 1}^K {{l_k}} }\right)!}\prod\limits_{k = 1}^K {\frac{{{w_k}^{{l_k}}}}{{{l_k}!}}}   ,
\end{multline}
with ${w_k} = \frac{{{\rho ^{2\left( {k + \delta  - 1} \right)}}}}{{1 - {\rho ^{2\left( {k + \delta  - 1} \right)}}}}{\left( {1 + \sum\nolimits_{\iota  = 1}^K {\frac{{{\rho ^{2\left( {\iota  + \delta  - 1} \right)}}}}{{1 - {\rho ^{2\left( {\iota  + \delta  - 1} \right)}}}}} } \right)^{ - 1}}$, ${\bf w}=(w_1,\cdots,w_K)$, and $\sum\nolimits_{{\bf l} \in {{\mathbb N}_0}^K}  {{W_{\bf{l}}}}  = 1$, i.e., ${\bf l} \sim {\rm NM}(1,\bf w)$.
\end{theorem}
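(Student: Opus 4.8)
The plan is to substitute the conditional density (\ref{eqn:g_hat_cond_t}) of $G$ and the exponential density (\ref{eqn_pdf_T_nak}) of $T$ into the mixing integral (\ref{eqn:f_G_PDF}), interchange the multi-index summation with the integration over $t$, and evaluate the remaining integral in closed form as a Gamma integral.

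First I would note that in (\ref{eqn:g_hat_cond_t}) the factor ${f_{{\mathcal A_{\bf{l}}}}}(x)$ does not depend on $t$, so after multiplying the ${\bf l}$-th summand by $f_T(t)=e^{-t}$ the only $t$-dependence is $t^{\sum_{k=1}^K l_k}\exp(-t(1+\sum_{k=1}^K c_k))$, where I abbreviate $c_k = \rho^{2(k+\delta-1)}/(1-\rho^{2(k+\delta-1)})$. Every summand in (\ref{eqn:g_hat_cond_t}) is nonnegative for $t\ge 0$ (the coefficients are products of nonnegative quantities and ${f_{{\mathcal A_{\bf{l}}}}}$ is itself a PDF), so Tonelli's theorem justifies swapping $\sum_{{\bf l}\in{\mathbb N}_0^K}$ with $\int_0^\infty(\cdot)\,dt$. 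Using the elementary identity $\int_0^\infty t^n e^{-at}\,dt = n!/a^{n+1}$ with $n=\sum_{k=1}^K l_k$ and $a = 1+\sum_{k=1}^K c_k$ then gives the weight
\begin{equation*}
W_{\bf l} = \frac{\left(\sum\nolimits_{k=1}^K l_k\right)!}{\left(1+\sum\nolimits_{k=1}^K c_k\right)^{1+\sum_{k=1}^K l_k}}\prod_{k=1}^K\frac{c_k^{l_k}}{l_k!};
\end{equation*}
pulling out a single factor $\left(1+\sum_{k=1}^K c_k\right)^{-1}$ and setting $w_k = c_k\left(1+\sum_{\iota=1}^K c_\iota\right)^{-1}$ reproduces exactly (\ref{eqn:W_l_def_sec}), which establishes (\ref{eqn:f_G_def_sec}).

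It then remains to verify that $\{W_{\bf l}\}$ is a genuine probability mass function and to identify it as negative multinomial. A one-line computation gives $1-\sum_{k=1}^K w_k = \left(1+\sum_{k=1}^K c_k\right)^{-1}$, so (\ref{eqn:W_l_def_sec}) has precisely the shape of the ${\rm NM}(1,{\bf w})$ pmf with complementary mass $p_0 = 1-\sum_{k=1}^K w_k$. The normalization $\sum_{{\bf l}\in{\mathbb N}_0^K} W_{\bf l}=1$ then follows by grouping terms according to $n=\sum_{k=1}^K l_k$, using the multinomial theorem to get $\sum_{l_1+\cdots+l_K=n}\frac{n!}{\prod_k l_k!}\prod_k w_k^{l_k} = \left(\sum_{k=1}^K w_k\right)^n$, and summing the geometric series $\sum_{n\ge 0}\left(\sum_{k=1}^K w_k\right)^n = \left(1-\sum_{k=1}^K w_k\right)^{-1}$; as a sanity check this also yields $\int_0^\infty f_G(x)\,dx = \sum_{\bf l}W_{\bf l}\int_0^\infty f_{{\mathcal A_{\bf l}}}(x)\,dx = 1$, as required.

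The Gamma integral and the multinomial bookkeeping are routine; the one step that genuinely requires care is the interchange of the infinite, multi-indexed sum with the integral over $t$, which I dispose of by nonnegativity via Tonelli. Should one prefer a dominated-convergence argument, it is equally available, since the already-established convergence of the weight series $\sum_{\bf l}W_{\bf l}$ to $1$ furnishes an integrable bound on the partial sums after integration.
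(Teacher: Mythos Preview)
Your proposal is correct and follows exactly the route the paper sets up in (\ref{eqn:g_hat_cond_t})--(\ref{eqn_pdf_T_nak}): the paper explicitly instructs to substitute (\ref{eqn:g_hat_cond_t}) and (\ref{eqn_pdf_T_nak}) into (\ref{eqn:f_G_PDF}) and then omits the remaining ``mathematical manipulations,'' which are precisely the Tonelli interchange, the Gamma integral $\int_0^\infty t^n e^{-at}\,dt=n!/a^{n+1}$, and the multinomial/geometric bookkeeping you carry out. Your justification of the sum--integral interchange via nonnegativity and your verification of the normalization are the natural details the paper suppresses.
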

\begin{proof}
The detailed derivation is omitted due to space limitation.
\end{proof}

Accordingly, the CDF of $G$ can be derived based on (\ref{eqn:f_G_def_sec}), as shown in the following corollary.
\begin{corollary}\label{the:cdf_pdf_corr_gam_shif}
The CDF of $ G$ is given by
\begin{multline}\label{eqn:CDF_G_def_sec}
{F_{ G}}\left( x \right) = \int\nolimits_0^x {{f_{ G}}\left( t \right)dt}  = {{\rm{E}}_{\bf{l}}}\left\{ {\int\nolimits_0^x {{f_{{{ {\mathcal A}}_{\bf{l}}}}}(t)dt}}\right\}\\
={{\rm{E}}_{\bf{l}}}\left\{ {{F_{{\mathcal A_{\bf{l}}}}}(x)}\right\} = \sum\limits_{{{\bf l}} \in {{\mathbb N}_0}^K}  {{W_{\bf{l}}}{F_{{{ {\cal A}}_{\bf{l}}}}}\left( x \right)},
\end{multline}
where ${{F_{{{ {\mathcal A}}_{\bf{l}}}}}(x)}$ denotes the CDF of a product of independent shifted-Gamma RVs $\mathcal {G}(1+l_k,\Omega_k)$, and can be expressed in terms of Fox's H function as (\ref{eqn:CDF_F_A_def_sec}), as shown at the top of next page.
\begin{figure*}[!t]
%\normalsize
\begin{align}\label{eqn:CDF_F_A_def_sec}
{F_{{{ {\mathcal A}}_{\bf{l}}}}}(x) = \int\nolimits_0^x {{f_{{{ {\mathcal A}}_{\bf{l}}}}}(t)dt}
%= Y_{1,K + 1}^{K,1}\left[ {\left. {\begin{array}{*{20}{c}}
%{\left( {1,1,0,1} \right)}  \\
%{\left( {1,1,\frac{1}{{{\Omega _1}}},1 + {l_1}} \right), \cdots ,\left( {1,1,\frac{1}{{{\Omega _K}}},1 + {l_K}} \right),\left( {0,1,0,1} \right)}
%\end{array}} \right|\frac{x}{{\prod\limits_{k = 1}^K {\Omega_k} }}} \right].
 = Y_{K + 1,1}^{1,K}\left[ {\left. {\begin{array}{*{20}{c}}
{\left( {0,1,\frac{1}{{{\Omega _1}}},1 + {l _1}} \right), \cdots ,\left( {0,1,\frac{1}{{{\Omega _K}}},1 + {l _K}} \right),\left( {1,1,0,1} \right)}\\
{\left( {0,1,0,1} \right)}
\end{array}} \right|\frac{{\prod\limits_{k = 1}^K {{\Omega _k}} }}{x}} \right].
\end{align}
\hrulefill
%\vspace*{4pt}
\end{figure*}
%and (\ref{eqn:CDF_F_A_def_sec}) holds by using the property 3 of generalized Fox's H Function in Appendix \ref{app:fox_H_fun}.
\end{corollary}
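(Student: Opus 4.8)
The plan is to prove the corollary in two stages: first, to justify that the integral defining the CDF of $G$ may be interchanged with the mixture in (\ref{eqn:f_G_def_sec}); second, to evaluate the resulting integral of a single product-of-shifted-Gamma density (\ref{eqn:pdf_produc_gamma_ind_sec}) in closed form as the Fox's $H$ function in (\ref{eqn:CDF_F_A_def_sec}).

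For the first stage I would invoke Tonelli's theorem. By the mixture representation in (\ref{eqn:f_G_def_sec}), $f_G(t)=\sum_{{\bf l}\in{{\mathbb N}_0}^K}W_{\bf l}f_{{\cal A}_{\bf l}}(t)$ is, for every $t\in(0,x)$, a sum of nonnegative terms (each $W_{\bf l}\ge 0$ and each $f_{{\cal A}_{\bf l}}$ is a genuine density), so $\int_0^x$ and $\sum_{\bf l}$ may be exchanged, giving
\[
{F_G}(x)=\int_0^x f_G(t)\,dt=\sum_{{\bf l}\in{{\mathbb N}_0}^K}W_{\bf l}\int_0^x f_{{\cal A}_{\bf l}}(t)\,dt=\sum_{{\bf l}\in{{\mathbb N}_0}^K}W_{\bf l}\,F_{{\cal A}_{\bf l}}(x),
\]
where $F_{{\cal A}_{\bf l}}$ is, by definition, the CDF of $\mathcal A_{\bf l}=\prod_{k=1}^K(1+R_{{\bf l},k})$ with $R_{{\bf l},k}\sim\mathcal G(1+l_k,\Omega_k)$. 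The series converges since $0\le F_{{\cal A}_{\bf l}}(x)\le 1$ while $\sum_{\bf l}W_{\bf l}=1$, and rewriting the sum as ${\rm E}_{\bf l}\{\cdot\}$ yields the chain of equalities in (\ref{eqn:CDF_G_def_sec}).

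For the second stage I would work in the Mellin domain, mirroring the derivation that produced (\ref{eqn:pdf_produc_gamma_ind_sec}). By independence of the $K$ factors, $\{{\cal M}f_{{\cal A}_{\bf l}}\}(s)={\rm E}\{{\cal A}_{\bf l}^{s-1}\}=\prod_{k=1}^K{\rm E}\{(1+R_{{\bf l},k})^{s-1}\}$, and the Eulerian integral for Tricomi's function \cite[9.211,4]{gradshteyn1965table} identifies each factor as $\Omega_k^{-(1+l_k)}\Psi(1+l_k,1+l_k+s;1/\Omega_k)$; hence $f_{{\cal A}_{\bf l}}(x)$ admits the Mellin--Barnes representation $\frac{1}{2\pi{\rm i}}\int_{c-{\rm i}\infty}^{c+{\rm i}\infty}x^{-s}\{{\cal M}f_{{\cal A}_{\bf l}}\}(s)\,ds$ along a suitable line ${\rm Re}(s)=c$. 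Integrating term by term over $(0,x)$ with $\int_0^x t^{-s}\,dt=x^{1-s}/(1-s)$ (admissible once $c<1$) gives
\[
{F_{{\cal A}_{\bf l}}}(x)=\frac{1}{2\pi{\rm i}}\int_{c-{\rm i}\infty}^{c+{\rm i}\infty}\frac{x^{1-s}}{1-s}\,\{{\cal M}f_{{\cal A}_{\bf l}}\}(s)\,ds .
\]
The additional factor $1/(1-s)=\Gamma(1-s)/\Gamma(2-s)$ inserts exactly one extra pole pair, i.e. one extra parameter pair in the generalized Fox's $H$ function of \cite{yilmaz2009productshifted,chelli2013performance}, incrementing the orders to $Y_{1,K+1}^{K,1}$ with argument proportional to $x$; applying the reciprocal-argument identity for $Y_{p,q}^{m,n}$ (which exchanges $m\leftrightarrow n$, $p\leftrightarrow q$ and $z\leftrightarrow 1/z$) then recasts the result in the form (\ref{eqn:CDF_F_A_def_sec}) with argument $\prod_{k=1}^K\Omega_k/x$. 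Equivalently, one may simply quote the tabulated closed form for $\int_0^x t^{\rho-1}Y_{p,q}^{m,n}[\omega t]\,dt$ with $\rho=1$.

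The routine ingredients are the two term-by-term interchanges (with $\int_0^x$ and along the Mellin--Barnes contour), which follow from the exponential/Gamma decay of the integrands and absolute convergence. The main obstacle is the analytic bookkeeping of the second stage: one must verify that $\{{\cal M}f_{{\cal A}_{\bf l}}\}(s)$ is analytic on a vertical strip containing some $c<1$, so that the strip for the PDF can be taken (or shifted) to the left of $s=1$ and the factor $\int_0^x t^{-s}\,dt$ is legitimate, and then propagate the parameter $4$-tuples correctly through both the insertion of the $1/(1-s)$ factor and the reciprocal-argument transformation so that they line up precisely with (\ref{eqn:CDF_F_A_def_sec}).
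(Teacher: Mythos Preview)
Your proposal is correct and is essentially the approach the paper takes, though the paper states the corollary without an explicit proof: the chain of equalities in (\ref{eqn:CDF_G_def_sec}) is precisely your Tonelli/termwise-integration argument, and the passage from (\ref{eqn:pdf_produc_gamma_ind_sec}) to (\ref{eqn:CDF_F_A_def_sec}) by inserting the factor $1/(1-s)=\Gamma(s)/\Gamma(s+1)$ (after a change of variable) and recasting in the generalized Fox's $H$ notation is exactly what the paper uses implicitly, as you can see from the Mellin--Barnes representation of $F_{{\cal A}_{\bf l}}$ that appears later in (\ref{eqn:der_A_1_asyCDF}). Your identification of the contour placement $c<1$ and the bookkeeping of the $4$-tuples through the reciprocal-argument identity as the only nontrivial checks is accurate; note that since ${\cal A}_{\bf l}\ge 1$ almost surely and each factor has exponential tails, $\{{\cal M}f_{{\cal A}_{\bf l}}\}(s)$ is entire, so shifting the line to any $c<1$ is unproblematic.
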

%\subsection{Exact Outage Probability ${p_{out,K}}$}
It follows from (\ref{eqn:out_prob_rew}) and (\ref{eqn:CDF_G_def_sec}) that the outage probability ${p_{out,K}}$ can be written as
\begin{equation}\label{eqn:out_prob_def_hat}
{p_{out,K}} = {F_{ G}}\left( {{2^{\mathcal R}} } \right)  = \sum\limits_{{{\bf l}} \in {{\mathbb N}_0}^K}  {{W_{\bf{l}}}{F_{{{ {\cal A}}_{\bf{l}}}}}\left( 2^{\mathcal R} \right)}, \, |\rho| \ne 1.
\end{equation}
%Due to the representation of infinite series of (\ref{eqn:out_prob_def_hat}), it is impossible to accurately compute (\ref{eqn:out_prob_def_hat}).
Furthermore, meaningful results can be extracted from (\ref{eqn:out_prob_def_hat}), as given in the following remark.
\begin{remark}
\label{remark}
In fact, ${F_{{{ {\cal A}}_{\bf{l}}}}}\left( 2^{\mathcal R} \right)$ can be regarded as an outage probability of HARQ-IR after $K$ transmissions over independent Nakagami-m fading channels, where ${R_{{\bf{l}},k}} \triangleq {P_k}{\left| {{h_{{{\cal A}_{\bf{l}}},k}}} \right|^2}$ and ${{h_{{{\mathcal A}_{\bf l}},k}}}$ denotes the corresponding channel coefficient in the $k$th transmission, i.e.,
\begin{equation}\label{eqn:channel_coeff_dec}
\left| {{h_{{{\mathcal A}_{\bf l}},k}}} \right| %\triangleq \sqrt {\frac{{{R_{{\bf l},k}}}}{{{P_k}}}} \\
\sim {\rm Nakagami}\left( {1 + {l_k},{\left( {1 + {l_k}} \right)(1-{\rho^{2(k+\delta-1)}}){\sigma_k}^2}} \right).
\end{equation}
Therefore, the outage probability ${p_{out,K}}$ can be treated as a mixture of outage probabilities of HARQ-IR over independent Nakagami fading channels where the weights are probabilities of ${\rm NM}(1,\bf w)$.
\end{remark}
The result in Remark 1 is very useful in the later asymptotic outage analysis.

\section{Asymptotic Outage Analysis}\label{sec:asy}
\subsection{Asymptotic Outage Probability}
Although outage probability of HARQ-IR over time-correlated fading channels can be derived in closed-form in (\ref{eqn:out_prob_def_hat}), straightforward insights still can not be found. Therefore, the asymptotic analysis of outage probability is resorted to extract meaningful insights for HARQ-IR. To proceed, we assume that
\begin{equation}\label{eqn:gamma_snr_rel}
({P_1},{P_2}, \cdots ,{P_K}) = \gamma {\boldsymbol{\theta }},
\end{equation}
where ${\boldsymbol{\theta }} = \left( {{\theta _1},{\theta _2}, \cdots ,{\theta _K}} \right)$ denotes a constant vector associated with power allocation. Noticing that ${p_{out,K}}$ is a weighted sum of ${F_{{{ {\cal A}}_{\bf{l}}}}}\left( 2^{\mathcal R} \right)$, some asymptotic properties with respect to ${F_{{{ {\cal A}}_{\bf{l}}}}}\left( 2^{\mathcal R} \right)$ are proved first under high SNR regime, i.e., $\gamma \to \infty$, to facilitate the asymptotic analysis of ${p_{out,K}}$.
%\subsection{Asymptotic Property of ${F_{{{ {\mathcal A}}_{\bf{l}}}}}(x)$}
Due to the independence of RVs $\{R_{{\bf l},k}\}_{k=1}^K$, the asymptotic expression of ${F_{{{ {\mathcal A}}_{\bf{l}}}}}(x)$ as $\gamma \to \infty$ can be derived.
Specifically, from (\ref{eqn:CDF_F_A_def_sec}), ${F_{{\mathcal A_{\bf{l}}}}}(x)$ can be written in the form of Mellin-Barnes integral, such that
%Putting (\ref{eqn:hat_c_def}) into (\ref{eqn:der_A_1_CDF}), the asymptotic expression for ${F_{{{ {\mathcal A}}_{\bf{l}}}}}(x)$ as $\gamma \to \infty$ can be written as
\begin{equation}\label{eqn:der_A_1_asyCDF}
{F_{{\mathcal A_{\bf{l}}}}}(x) = \frac{1}{{2\pi {\rm{i}}}}\int_{c - {\rm{i}}\infty }^{c + {\rm{i}}\infty } {\frac{{\Gamma \left( s \right)}}{{\Gamma \left( {s + 1} \right)}}\prod\limits_{k = 1}^K {\frac{{\Psi \left( {1,2{\rm{ + }}{l_k} - s;\frac{1}{{{\gamma\zeta _k}}}} \right)}}{{{{\left( {{\gamma\zeta _k}} \right)}^{m + {l_k}}}}}} {x^s}ds},
\end{equation}
where ${\zeta _k} = {\theta _k}{\sigma _k}^2\left( {1 - {\rho ^{2\left( {k + \delta  - 1} \right)}}} \right)$ and $c \in (0,\infty)$.
By adopting \cite[Eq.9.210.2]{gradshteyn1965table}, it produces
\begin{align}\label{eqn:F_hat_A_0_der_fur_ex}
&{F_{{\mathcal A_{\bf{l}}}}}(x) = \frac{1}{{2\pi {\rm{i}}}}\int_{{c_1} - {\rm{i}}\infty }^{{c_1} + {\rm{i}}\infty } {\frac{{\Gamma \left( s \right)}}{{\Gamma \left( {s + 1} \right)}}\prod\limits_{k = 1}^K {\frac{1}{{{{\left( {{\gamma\zeta _k}} \right)}^{1 + {l_k}}}}}} }\times \notag\\
&  \left( {\begin{array}{*{20}{l}}
{{{\frac{{\Gamma \left( { - 1 - {l_k} + s} \right)}}{{\Gamma \left( s \right)}}}_1}{F_1}\left( {1 + {l_k},2 + {l_k} - s;\frac{1}{{{\gamma\zeta _k}}}} \right) + } \\
{\frac{{\Gamma \left( {1 + {l_k} - s} \right)}}{{\Gamma \left( {1 + {l_k}} \right)}}{{\left( {\frac{1}{{{\gamma\zeta _k}}}} \right)}^{s - m - {l_k}}}{}_1{F_1}\left( {s, - {l_k} + s;\frac{1}{{{\gamma\zeta _k}}}} \right)}
\end{array}} \right){x^s}ds.
\end{align}
Since ${}_1{F_1}\left( {\alpha ,\beta ;\frac{1}{{{\gamma\zeta _k}}}} \right)$ can be expanded as
\begin{equation}\label{eqn:F_11_series_exp}
{}_1{F_1}\left( {\alpha ,\beta ;\frac{1}{{{\gamma\zeta _k}}}} \right) = 1 + \frac{\beta }{\alpha }\frac{1}{{{\gamma\zeta _k}}} + o\left( {\frac{1}{{{\gamma}}}} \right),
\end{equation}
where $o(\cdot)$ refers to the little-O notation, and we say $f(\gamma) \in o(\phi(\gamma))$ provided that $\lim \limits_{\gamma \to \infty}f(\gamma)/\phi(\gamma) = 0$.
Clearly, as $\gamma$ approaches to infinity, the dominant term in (\ref{eqn:F_11_series_exp}) is $1$.
 %\begin{align}\label{eqn:F_Al_as_0}
%{F_{{A_{\bs \ell} }}}\left( x \right) &= \Pr \left( {\prod\limits_{k = 1}^K {(1 + {R_{{\bs \ell} ,k}})}  \le x} \right) \notag\\
% &\le \Pr \left( {{R_{{\bs \ell},1}} \le x - 1} \right) = \frac{{\Upsilon \left( {m + {l_1},{\Omega _1}x} \right)}}{{\Gamma \left( {m + {l_1}} \right)}} = O\left( {{x^\infty }} \right)
% \end{align}

Herein, we assume $c>1+{\rm max}\{\bf{l}\}$ because $c_1 $ could be any point in $(0,\infty)$. Therefore, (\ref{eqn:F_hat_A_0_der_fur_ex}) can be further written as
\begin{align}\label{eqn:F_A_0_asym_F_hy_exp}
{F_{{\mathcal A_{\bf{l}}}}}(x) &= \prod\limits_{k = 1}^K {\frac{1}{{{{\left( {\gamma {\zeta _k}} \right)}^{1 + {l_k}}}}}} \frac{1}{{2\pi {\rm{i}}}} \times \notag\\
&\quad \int_{{c} - {\rm{i}}\infty }^{{c} + {\rm{i}}\infty } {\frac{{\Gamma \left( s \right)}}{{\Gamma \left( {s + 1} \right)}}\prod\limits_{k = 1}^K {\frac{{\Gamma \left( { - 1 - {l_k} + s} \right)}}{{\Gamma \left( s \right)}}} {x^s}ds} \notag\\
 &\quad + o\left( {{\gamma ^{ -{d_{{\mathcal A_{\bf{l}}}}}}}} \right)
% &= \prod\limits_{k = 1}^K {\frac{1}{{{{\left( {\gamma {\zeta _k}} \right)}^{1 + {l_k}}}}}} G_{K + 1,K + 1}^{0,K + 1}\left( {\left. {\begin{array}{*{20}{c}}
%{1,2 + {l_1}, \cdots ,2 + {l_K}}\\
%{1, \cdots ,1,0}
%\end{array}} \right|x} \right) \notag\\
%&+ o\left( {{\gamma ^{ - K - \sum\limits_{k = 1}^K {{l_k}} }}} \right).
\end{align}
where ${d_{{A_{\bf{l}}}}} = K + \sum\nolimits_{k = 1}^K {{l_k}} $.
After some manipulations, ${{F_{{{ {\cal A}}_{\bf{l}}}}}(x)}$ can be finally expressed as
\begin{multline}\label{eqn:F_A_0_asym_F_hy_exp1}
{{F_{{{ {\cal A}}_{\bf{l}}}}}(x)} = \prod\limits_{k = 1}^K {{{\left( {\frac{1}{{{\theta _k}{\sigma _k}^2\left( {1 - {\left( {1 - {\rho ^{2\left( {k + \delta  - 1} \right)}}} \right)}^2} \right)}}} \right)}^{1 +l_k }}} \\
G_{K + 1,K + 1}^{0,K + 1}\left( {\left. {\begin{array}{*{20}{c}}
{1,2 +l_1, \cdots ,2 +l_K}\\
{1, \cdots ,1,0}
\end{array}} \right|x} \right){\gamma ^{-d_{{ {\cal A}_{\bf{l}}}}}}  \\
+ o\left( {{\gamma ^{-d_{{ {\cal A}_{\bf{l}}}}}}} \right).
\end{multline}
%It is worth noting that (\ref{eqn:g_0_0_der_meij_rem}) holds for ${{\rm min}\{\bf{l}\}}+m > 0$  considering from Remark \ref{rem:chann}
%that $R_{{\bf{l}},k} \sim \mathcal {G}(m+l_k,{\Omega _k})$, where $l_k + m > 0$.

According to (\ref{eqn:F_A_0_asym_F_hy_exp1}), it is readily found that the following lemma holds.
\begin{lemma}\label{the:cdf_pdf_relatoin_hat}
As $\gamma \to \infty$, the ratio of ${{F_{{ {\mathcal A}_{\bf{l}}}}}\left( x \right)}$ to ${{F_{{ {\mathcal A}_{\bf{0}}}}}\left( x \right)}$ satisfies
\begin{equation}\label{eqn:F_AI_F_A0_rel_hat}
\frac{{{F_{{ {\cal A}_{\bf{l}}}}}\left( x \right)}}{{{F_{{ {\cal A}_{\bf{0}}}}}\left( x \right)}} = o\left( {{1}} \right),\, {{{\bf l}} \in {{\mathbb N}_0}^K} \,{\rm and }\, {\bf l \ne 0}.
\end{equation}
%\begin{equation}\label{eqn:F_AI_F_A0_rel_hat}
%\frac{{{F_{{ {\cal A}_{{\bf{l}}^{\left(1\right)}}}}}\left( x \right)}}{{{F_{{ {\cal A}_{{\bf{l}}^{\left(2\right)}}}}}\left( x \right)}} = o\left( 1 \right),\quad \sum\limits_{k = 1}^K {l_k^{\left( 1 \right)}}  < \sum\limits_{k = 1}^K {l_k^{\left( 2 \right)}} .
%\end{equation}
%where ${{\bf{l}}^{\left( 1 \right)}} = \left( {l_1^{\left( 1 \right)},\cdots,l_K^{\left( 1 \right)}} \right)$ and ${{\bf{l}}^{\left( 2 \right)}} = \left( {l_1^{\left( 2 \right)},\cdots,l_K^{\left( 2 \right)}} \right)$. Hence, for a sufficiently large SNR, i.e., $\gamma \gg 1$, we have
%\begin{equation}\label{eqn:bound_F_A_l}
%\mathop {\max }\limits_{{l_1} + \cdots  + {l_K} \ge N+1} \left( {{F_{{ {\mathcal A}_{\bf{l}}}}}\left( x \right)} \right) = \mathop {\max }\limits_{{l_1} +  \cdots  + {l_K} = N+1} \left( {{F_{{ {\mathcal A}_{\bf{l}}}}}\left( x \right)} \right).
%\end{equation}
\end{lemma}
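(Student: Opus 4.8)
\emph{Proof plan.} The claim is essentially a corollary of the asymptotic expansion (\ref{eqn:F_A_0_asym_F_hy_exp1}), which, for a fixed multi-index ${\bf l}$ and the fixed argument $x=2^{\mathcal R}$, isolates the leading power of $\gamma$ in $F_{\mathcal A_{\bf l}}(x)$. Denote by $c_{\bf l}(x)$ the $\gamma$-independent coefficient of $\gamma^{-d_{\mathcal A_{\bf l}}}$ appearing there --- namely $\prod_{k=1}^K\left(\theta_k\sigma_k^2(1-(1-\rho^{2(k+\delta-1)})^2)\right)^{-(1+l_k)}$ times the accompanying Meijer-$G$ value --- so that (\ref{eqn:F_A_0_asym_F_hy_exp1}) reads $F_{\mathcal A_{\bf l}}(x)=c_{\bf l}(x)\,\gamma^{-d_{\mathcal A_{\bf l}}}+o(\gamma^{-d_{\mathcal A_{\bf l}}})$, with $d_{\mathcal A_{\bf l}}=K+\sum_{k=1}^K l_k$ as in the text.

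First I would establish $c_{\bf 0}(x)>0$, so that the denominator $F_{\mathcal A_{\bf 0}}(x)$ is genuinely of order $\gamma^{-K}$ and not of smaller order. A direct lower bound suffices: since $1+R_{{\bf 0},k}<x^{1/K}$ for all $k$ implies $\prod_{k}(1+R_{{\bf 0},k})<x$, and each $R_{{\bf 0},k}\sim\mathcal G(1,\gamma\zeta_k)$ is exponential with mean $\gamma\zeta_k$, one gets for $x=2^{\mathcal R}>1$
\[
F_{\mathcal A_{\bf 0}}(x)\ \ge\ \prod_{k=1}^K\left(1-e^{-(x^{1/K}-1)/(\gamma\zeta_k)}\right)\ \sim\ \gamma^{-K}\prod_{k=1}^K\frac{x^{1/K}-1}{\zeta_k}\qquad(\gamma\to\infty).
\]
Comparing with (\ref{eqn:F_A_0_asym_F_hy_exp1}) forces $c_{\bf 0}(x)\ge\prod_{k=1}^K(x^{1/K}-1)/\zeta_k>0$.

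With $c_{\bf 0}(x)\neq 0$ secured, dividing the two asymptotic relations is legitimate and gives, for every fixed ${\bf l}\in{\mathbb N}_0^K$,
\[
\frac{F_{\mathcal A_{\bf l}}(x)}{F_{\mathcal A_{\bf 0}}(x)}=\frac{c_{\bf l}(x)\,\gamma^{-d_{\mathcal A_{\bf l}}}+o(\gamma^{-d_{\mathcal A_{\bf l}}})}{c_{\bf 0}(x)\,\gamma^{-K}+o(\gamma^{-K})}=\frac{c_{\bf l}(x)}{c_{\bf 0}(x)}\,\gamma^{-(d_{\mathcal A_{\bf l}}-K)}\left(1+o(1)\right).
\]
Since $d_{\mathcal A_{\bf l}}-K=\sum_{k=1}^K l_k\ge 1$ whenever ${\bf l}\ne{\bf 0}$, the exponent of $\gamma$ is a negative integer, so the right-hand side tends to $0$ as $\gamma\to\infty$; that is, the ratio is $o(1)$, which is exactly (\ref{eqn:F_AI_F_A0_rel_hat}).

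The only non-routine step is the non-vanishing of $c_{\bf 0}(x)$, i.e. the matching $\Theta(\gamma^{-K})$ lower bound used in the division; everything else is bookkeeping on the diversity orders $d_{\mathcal A_{\bf l}}$, and no uniformity in ${\bf l}$ is required since ${\bf l}$ is fixed throughout this lemma. One may even skip an explicit evaluation of $c_{\bf 0}(x)$ and argue purely with orders: $F_{\mathcal A_{\bf l}}(x)=O(\gamma^{-d_{\mathcal A_{\bf l}}})$ from (\ref{eqn:F_A_0_asym_F_hy_exp1}) together with $F_{\mathcal A_{\bf 0}}(x)=\Theta(\gamma^{-K})$ already yields $F_{\mathcal A_{\bf l}}(x)/F_{\mathcal A_{\bf 0}}(x)=O(\gamma^{-\sum_k l_k})=o(1)$.
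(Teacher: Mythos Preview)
Your argument is correct and follows the same route the paper takes: the paper simply states that the lemma ``is readily found'' from (\ref{eqn:F_A_0_asym_F_hy_exp1}), i.e., by comparing the leading powers $\gamma^{-d_{\mathcal A_{\bf l}}}$ with $d_{\mathcal A_{\bf l}}=K+\sum_k l_k$. Your additional care in verifying $F_{\mathcal A_{\bf 0}}(x)=\Theta(\gamma^{-K})$ via the product lower bound is a point the paper leaves implicit but is indeed needed to legitimize the division.
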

This lemma will enable the derivation of the asymptotic outage probability ${p_{out\_asy,K}}$.
%\subsection{Asymptotic Analysis of ${p_{out,K}}$}
Specifically, with Lemma \ref{the:cdf_pdf_relatoin_hat}, the CDF ${F_{ G}}\left( x \right)$ can be written as
\begin{align}\label{eqn:cdf_F_shif_gam_pro_asy}
{F_{ G}}\left( x \right) %&= {W_{\bf{0}}}{F_{{{ {\mathcal A}}_{\bf{0}}}}}\left( x \right) + \sum\limits_{{l_1} +  \cdots  + {l_K} > 0}  {{W_{\bf{l}}}{F_{{{ {\mathcal A}}_{\bf{l}}}}}\left( x \right)} \notag \\
&= {W_{\bf{0}}}{F_{{{ {\mathcal A}}_{\bf{0}}}}}\left( x \right)\left( {1 + \frac{1}{{{W_{\bf{0}}}}}\sum\limits_{{l_1} + \cdots  + {l_K} > 0}  {{W_{\bf{l}}}\frac{{{F_{{{ {\mathcal A}}_{\bf{l}}}}}\left( x \right)}}{{{F_{{{ {\mathcal A}}_{\bf{0}}}}}\left( x \right)}}} } \right)\notag\\
& = {W_{\bf{0}}}{F_{{{ {\mathcal A}}_{\bf{0}}}}}\left( x \right)\left( {1 + \frac{1}{{{W_{\bf{0}}}}}\sum\limits_{{l_1} +  \cdots  + {l_K} > 0}  {{W_{\bf{l}}}o\left( 1 \right)} } \right)  \notag \\
&= {W_{\bf{0}}}{F_{{{ {\mathcal A}}_{\bf{0}}}}}\left( x \right)\left( {1 + o\left( 1 \right)} \right),
\end{align}
where the last step holds since $W_{\bf l}$ is irrelevant to $\gamma$ and $\sum\nolimits_{{l_1} + \cdots  + {l_K} > 0}  {{W_{\bf{l}}}} = 1-W_{\bf 0} < 1$.
%From (\ref{eqn:cdf_F_shif_gam_pro_asy}), as $\gamma \to \infty$, the outage probability ${ {p_{out,K}}}$ can be asymptotically written as
%\begin{equation}\label{eqn:asym_out}
%{ {p_{out,K,\infty}}} = {F_{ G}}\left( {{2^{\cal R}}} \right) \approx {W_{\bf{0}}}{F_{{ {\mathcal A}_{\bf{0}}}}}\left( {{2^{\cal R}} } \right), \, |\rho| \ne 1,
%\end{equation}
%which means that the outage probability under time-correlated Rayleigh fading channels can be asymptotically expressed as a weighted outage probability under independent Rayleigh fading channels with mean squared magnitude ${(1-{\rho^{2(k+\delta-1)}}){\sigma_k}^2}$.

With (\ref{eqn:F_A_0_asym_F_hy_exp1}), (\ref{eqn:cdf_F_shif_gam_pro_asy}) can be further simplified as
\begin{align}\label{eqn:der_hat_G_CDF_asy_fur}
{F_{ G}}\left( x \right)% &= {W_{\bf{0}}} c\left( {x,\gamma ,\rho,\delta,{\bf{l}},K,\bs \theta } \right){\gamma ^{ - K}}\left( {1 + o\left( 1 \right)} \right) \notag\\
% &= {W_{\bf{0}}}{\gamma ^{ - mK}}\left( {g_{{\bf{0}},{\bf{0}}}}\left( x \right){\prod\limits_{k = 1}^K {{{\left( {\frac{m}{{{\theta _k}{\sigma _k}^2\left( {1 - {\rho^{2(k+\delta-1)}}} \right)}}} \right)}^m}}  + o\left( 1 \right)} \right)\left( {1 + o\left( 1 \right)} \right) \notag\\
% &= {W_{\bf{0}}}{\gamma ^{ - mK}}{g_{{\bf{0}},{\bf{0}}}}\left( x \right) \prod\limits_{k = 1}^K {{{\left( {\frac{m}{{{\theta _k}{\sigma _k}^2\left( {1 - {\rho^{2(k+\delta-1)}}} \right)}}} \right)}^m}}   \left( {1 + o\left( 1 \right)} \right) \notag \\
 &= {\left( {\ell \left( {\rho,K} \right)\prod\limits_{k = 1}^K {{{{\theta _k}{\sigma _k}^2}}} } \right)^{ - 1}} {\mathcal G_K}\left( x \right) {\gamma ^{ - K}} + o\left( \gamma^{-K} \right),
\end{align}
where $\ell \left( {\rho,K} \right)$ is a function of $\rho$ and $K$, such that
 \begin{equation}\label{eqn:ell_def}
\ell \left( {\rho,K} \right) = \left( {1 + \sum\limits_{k = 1}^K {\frac{{{\rho^{2(k+\delta-1)}}}}{{1 - {\rho^{2(k+\delta-1)}}}}} } \right)\prod\limits_{k = 1}^K {\left( {1 - {\rho^{2(k+\delta-1)}}} \right)}.
\end{equation}
and
\begin{align}\label{eqn:def_calG}
{{\cal G}_K}\left( x \right) &= G_{K + 1,K + 1}^{0,K + 1}\left( {\left. {\begin{array}{*{20}{c}}
{1,2, \cdots ,2}\\
{1, \cdots ,1,0}
\end{array}} \right|x} \right)\notag\\
%&= \frac{1}{{2\pi {\rm{i}}}}\int\nolimits_{c - {\rm{i}}\infty }^{c + {\rm{i}}\infty } {\frac{{{x^s}}}{{s{{\left( {s - 1} \right)}^K}}}ds}  \notag\\
%&= \sum\limits_{a = 0}^1 {{\rm{Res}}\left\{ {\frac{{{x^s}}}{{s{{\left( {s - 1} \right)}^K}}},s = a} \right\}}
% \notag\\
&={\left( { - 1} \right)^K} + x\sum\limits_{k = 0}^{K - 1} {{{\left( { - 1} \right)}^k}\frac{{{{\left( {\ln{x}} \right)}^{K - k - 1}}}}{{\left( {K - k - 1} \right)!}}}.
\end{align}
where ${\rm Res}(f,a_k)$ denotes the residue of $f$ at $a_k$.
Substituting (\ref{eqn:der_hat_G_CDF_asy_fur}) into (\ref{eqn:out_prob_def_hat}) along with (\ref{eqn:gamma_snr_rel}), the asymptotic outage probability ${p_{out\_asy,K}}$ can be derived, as shown in the following theorem.
\begin{theorem}\label{the:pout_asy_sim}
The asymptotic outage probability ${p_{out\_asy,K}}$ can be decoupled as
\begin{equation}\label{eqn:out_pro_hat_G_asy_def}
{p_{out\_asy,K}} = \underbrace {{\mathcal G_{K}}({2^{\cal R}})}_A\underbrace {{{\left( {\ell \left( {\rho ,K} \right)} \right)}^{ - 1}}}_B\underbrace {\frac{1}{{{P_\Pi }\prod\limits_{k = 1}^K {{\sigma _k}^2} }} }_C, \, |\rho| \ne 1.
\end{equation}
where ${P_\Pi } = \prod\nolimits_{k = 1}^K {{P_K}} $. The impacts of transmission rate, channel time correlation and transmit powers on outage probability are clearly quantified through the terms of $A$, $B$ and $C$, respectively.
\end{theorem}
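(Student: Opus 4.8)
The plan is to let the groundwork already laid in this section collapse the exact outage expression \eqref{eqn:out_prob_def_hat}, namely $p_{out,K}=\sum_{{\bf l}\in{{\mathbb N}_0}^{K}}W_{\bf l}F_{{\mathcal A}_{\bf l}}(2^{\mathcal R})$, down to a single dominant term. First I would isolate the ${\bf l}={\bf 0}$ summand exactly as in the first line of \eqref{eqn:cdf_F_shif_gam_pro_asy}, and apply Lemma~\ref{the:cdf_pdf_relatoin_hat} together with $\sum_{l_1+\cdots+l_K>0}W_{\bf l}=1-W_{\bf 0}$ to reduce the correction factor to $1+o(1)$, so that $p_{out,K}=W_{\bf 0}\,F_{{\mathcal A}_{\bf 0}}(2^{\mathcal R})\,(1+o(1))$. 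It then remains only to evaluate the surviving term $W_{\bf 0}F_{{\mathcal A}_{\bf 0}}(2^{\mathcal R})$.

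Next I would use the asymptotic expansion of $F_{{\mathcal A}_{\bf 0}}$. Setting ${\bf l}={\bf 0}$ in the Mellin--Barnes representation \eqref{eqn:F_A_0_asym_F_hy_exp} makes the integrand collapse to $x^{s}/(s(s-1)^{K})$, after $\Gamma(s)/\Gamma(s+1)=1/s$ and $\Gamma(s-1)/\Gamma(s)=1/(s-1)$; closing the contour to the left (legitimate since $x=2^{\mathcal R}>1$) and summing the residues at the simple pole $s=0$ and the order-$K$ pole $s=1$ recovers the closed form $\mathcal G_{K}(x)$ of \eqref{eqn:def_calG}, while the prefactor at ${\bf l}={\bf 0}$ is $\gamma^{-K}\prod_{k=1}^{K}\zeta_{k}^{-1}$ with $\zeta_{k}=\theta_{k}\sigma_{k}^{2}(1-\rho^{2(k+\delta-1)})$. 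Multiplying by $W_{\bf 0}$, whose reciprocal equals $1+\sum_{k=1}^{K}\rho^{2(k+\delta-1)}/(1-\rho^{2(k+\delta-1)})$ by \eqref{eqn:W_l_def_sec}, and using the identity $W_{\bf 0}\prod_{k=1}^{K}(1-\rho^{2(k+\delta-1)})^{-1}=(\ell(\rho,K))^{-1}$ read from \eqref{eqn:ell_def}, one arrives at \eqref{eqn:der_hat_G_CDF_asy_fur}; evaluated at $x=2^{\mathcal R}$ this is the leading term of $p_{out,K}$. Finally the power scaling \eqref{eqn:gamma_snr_rel} gives $\gamma^{K}\prod_{k=1}^{K}\theta_{k}=\prod_{k=1}^{K}(\gamma\theta_{k})=\prod_{k=1}^{K}P_{k}=P_{\Pi}$, hence $\gamma^{-K}/\prod_{k=1}^{K}\theta_{k}=P_{\Pi}^{-1}$, so that $\gamma^{-K}$ collapses into $P_{\Pi}^{-1}$ and $p_{out\_asy,K}=\mathcal G_{K}(2^{\mathcal R})\,(\ell(\rho,K))^{-1}\,(P_{\Pi}\prod_{k=1}^{K}\sigma_{k}^{2})^{-1}$, which is the product $A\cdot B\cdot C$. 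The decoupling is then immediate: $A=\mathcal G_{K}(2^{\mathcal R})$ depends only on $\mathcal R$ (and $K$), $B=(\ell(\rho,K))^{-1}$ only on $\rho$ (and $K,\delta$), and $C$ only on the powers and on $\{\sigma_{k}^{2}\}$.

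The one step that deserves genuine care --- and the one I would write out in full --- is the passage from the infinite sum to $1+o(1)$ used above: Lemma~\ref{the:cdf_pdf_relatoin_hat} delivers only the pointwise-in-${\bf l}$ vanishing of the ratios $F_{{\mathcal A}_{\bf l}}(2^{\mathcal R})/F_{{\mathcal A}_{\bf 0}}(2^{\mathcal R})$, which by itself does not license interchanging the limit with an infinite sum. What supplies the missing uniformity is a monotonicity observation: since $R_{{\bf l},k}\sim\mathcal G(1+l_k,\Omega_k)$ is stochastically nondecreasing in $l_k$ (its scale $\Omega_k$ being free of ${\bf l}$) and the $R_{{\bf l},k}$ are independent across $k$, the product $\mathcal A_{\bf l}=\prod_{k=1}^{K}(1+R_{{\bf l},k})$ is stochastically nondecreasing in ${\bf l}$, whence $F_{{\mathcal A}_{\bf l}}(2^{\mathcal R})\le F_{{\mathcal A}_{\bf 0}}(2^{\mathcal R})$ for every ${\bf l}$ and therefore $0\le F_{{\mathcal A}_{\bf l}}(2^{\mathcal R})/F_{{\mathcal A}_{\bf 0}}(2^{\mathcal R})\le 1$ uniformly in ${\bf l}$. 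Then $\{W_{\bf l}\}$ is a $\gamma$-independent summable envelope, so dominated convergence over the discrete index set, together with Lemma~\ref{the:cdf_pdf_relatoin_hat}, gives $\sum_{l_1+\cdots+l_K>0}W_{\bf l}F_{{\mathcal A}_{\bf l}}(2^{\mathcal R})/F_{{\mathcal A}_{\bf 0}}(2^{\mathcal R})\to 0$, which is precisely the $o(1)$ invoked in the first paragraph. Everything else is substitution plus the bookkeeping of sorting the factors into $A$, $B$ and $C$; in particular the residue evaluation behind \eqref{eqn:def_calG} is purely mechanical once the Mellin--Barnes integrand has been reduced to $x^{s}/(s(s-1)^{K})$.
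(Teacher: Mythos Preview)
Your proposal is correct and follows essentially the same route as the paper: isolate the ${\bf l}={\bf 0}$ term via Lemma~\ref{the:cdf_pdf_relatoin_hat}, read off the leading asymptotic of $F_{{\mathcal A}_{\bf 0}}$ from the Mellin--Barnes integral (which is exactly how the paper obtains \eqref{eqn:F_A_0_asym_F_hy_exp}--\eqref{eqn:F_A_0_asym_F_hy_exp1} and \eqref{eqn:def_calG}), combine with $W_{\bf 0}$ to produce $(\ell(\rho,K))^{-1}$, and then absorb $\gamma^{-K}\prod_k\theta_k^{-1}$ into $P_\Pi^{-1}$ via \eqref{eqn:gamma_snr_rel}. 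The one place you go beyond the paper is the last paragraph: the paper simply asserts, after \eqref{eqn:cdf_F_shif_gam_pro_asy}, that the tail sum is $o(1)$ because ``$W_{\bf l}$ is irrelevant to $\gamma$'' and $\sum_{{\bf l}\neq{\bf 0}}W_{\bf l}<1$, which leaves the limit--sum interchange unjustified; your stochastic-monotonicity bound $F_{{\mathcal A}_{\bf l}}\le F_{{\mathcal A}_{\bf 0}}$ supplies the missing uniform envelope and makes the dominated-convergence step airtight.
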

It is noteworthy that the asymptotic outage probability of HARQ-IR has never been accurately derived even under independent fading channels, which justifies the significance of our work.
\subsection{Discussions}
%According to Theorem \ref{the:pout_asy_sim}, the impacts of transmission rate and time correlation, and diversity order is specifically discussed as follows.
%\subsubsection{Diversity Order}
%The diversity order $d$ is defined as \cite{zheng2003diversity}
%\begin{equation}\label{eqn:diver_order_def}
%d =  - \mathop {\lim }\limits_{{\gamma} \to \infty } \frac{{\ln \left( {{{p}_{out,K}}} \right)}}{{\ln \left( {{\gamma}} \right)}}.
%\end{equation}
%By using (\ref{eqn:out_pro_hat_G_asy_def}), it follows that
%\begin{align}\label{eqn:diversity_order_def_der}
%d &= K, \quad |\rho| \ne 1.
%\end{align}
%Thus the diversity order of HARQ-IR is equal to the number of transmissions $K$, i.e., full diversity can be achieved by HARQ-IR even under time-correlated fading channels when $|\rho| \ne 1$. It is noteworthy that the same result does not hold under fully correlated fading channels, i.e., $|\rho|=1$. Under fully correlated fading channels, the diversity order reduces to $1$ since no time diversity can be achieved from retransmissions.
\subsubsection{Impact of Transmission Power}
It is shown in (\ref{eqn:out_pro_hat_G_asy_def}) that the asymptotic outage probability is influenced by the product of transmission powers in all HARQ rounds, i.e., ${P_\Pi }$. Therefore, the optimal power allocation for HARQ-IR over time-correlated fading channels can be enabled by adopting the same method developed in \cite{chaitanya2016energy}. For more details about optimal power allocation of HARQ, please consult with \cite{chaitanya2016energy}.
\subsubsection{Effect of Transmission Rate}
%Clearly, ${{g_{\bf{0}}}({2^{\cal R}})}$ should be studied to investigate the impact of transmission rate. From (\ref{eqn:sec_def_g_fun}), it is not difficult to prove that ${{g_{\bf{0}}}({2^{\cal R}})}$ is an increasing function of $\cal R$. Thus the increase of transmission rate leads to the decrease of outage probability.
%It is very difficult to evaluate ${g_{{\bf{0}}}}\left( {2^{\cal R}} \right)$ using (\ref{eqn:sec_def_g_fun}) due to its high computational complexity of tackling multiple-fold integral. To avoid that, an alternative simple representation of ${g_{{\bf{0}}}({2^{\cal R}})}$ is derived by simplifying the integral of (\ref{eqn:sec_def_g_fun}) as
%\begin{equation}\label{eqn:g_l_def}
%{g_{\bf 0}}\left(2^{\cal R} \right) = {\left( { - 1} \right)^K} + {2^{\cal R}}\sum\limits_{k = 0}^{K - 1} {{{\left( { - 1} \right)}^k}\frac{{{{\left( {\mathcal R\ln 2} \right)}^{K - k - 1}}}}{{\left( {K - k - 1} \right)!}}}.
%\end{equation}
Clearly from (\ref{eqn:out_pro_hat_G_asy_def}), the impact of transmission rate on asymptotic outage probability is determined by ${{\cal G}_K}\left( 2^{\cal R} \right)$. With (\ref{eqn:def_calG}), it can be proved that the first and the second derivative of ${{\cal G}_K}\left( 2^{\cal R} \right)$ with respect to $\cal R$ are non-negative, more precisely
\begin{align}\label{eqn:g_K_fir_der}
\frac{{d{\mathcal G_K}\left( {{2^{\cal R}}} \right)}}{{d{\cal R}}} &= \frac{{\ln \left( 2 \right)}}{{2\pi {\rm{i}}}}\int\nolimits_{{c} - {\rm{i}}\infty }^{{c} + {\rm{i}}\infty } {\frac{{{2^{{\cal R}s}}}}{{{{\left( {s - 1} \right)}^K}}}ds} \notag\\
%&= \ln \left( 2 \right)\sum\limits_{a = 1} {{\rm{Res}}\left\{ {\frac{{s{2^{{\cal R}s}}}}{{{{\left( {s - 1} \right)}^K}}},s = a} \right\}} \notag\\
 %&= \frac{{\ln \left( 2 \right)}}{{\left( {K - 1} \right)!}}{\left. {\frac{{{\partial ^{K - 1}}\left( {{2^{{\cal R}s}}} \right)}}{{\partial {s^{K - 1}}}}} \right|_{s = 1}} \notag\\
&= \frac{{\ln \left( 2 \right){{\left( {{\cal R}\ln \left( 2 \right)} \right)}^{K - 1}}}}{{\left( {K - 1} \right)!}}{2^{\cal R}} \ge 0,
\end{align}
\begin{multline}\label{eqn:g_K_fun_sec_der}
\frac{{d^2{\mathcal G_K}\left( {{2^{\cal R}}} \right)}}{{d{\cal R}^2}} = \frac{\left( {\ln 2} \right)^2}{2 \pi \rm i}\int\nolimits_{{c} - {\rm{i}}\infty }^{{c} + {\rm{i}}\infty } {\frac{{s{2^{{\cal R}s}}}}{{{{\left( {s - 1} \right)}^K}}}ds} \\
 = {\left( {\ln 2} \right)^2}\sum\limits_{a = 1} {{\mathop{\rm Res}\nolimits} \left\{ {\frac{{s{2^{{\cal R}s}}}}{{{{\left( {s - 1} \right)}^K}}},s = a} \right\}}
 \\
 %= \frac{{{{\left( {\ln 2} \right)}^2}}}{{\left( {K - 1} \right)!}}{\left. {\frac{{{\partial ^{K - 1}}\left( {s{2^{{\cal R}s}}} \right)}}{{\partial {s^{K - 1}}}}} \right|_{s = 1}} \\
 %= \frac{{{{\left( {\ln 2} \right)}^2}}}{{\left( {K - 1} \right)!}}{\left( {\sum\limits_{k = 0}^{K - 1} {C_{K - 1}^k{s^{\left( k \right)}}{{\left( {{2^{{\cal R}s}}} \right)}^{\left( {K - k - 1} \right)}}} } \right)_{s = 1}} \\
 %= \frac{{{{\left( {\ln 2} \right)}^2}}}{{\left( {K - 1} \right)!}}{\left( {{2^{{\cal R}s}}s{{\left( {{\cal R}\ln 2} \right)}^{K - 1}} + \left( {K - 1} \right){2^{{\cal R}s}}{{\left( {{\cal R}\ln 2} \right)}^{K - 2}}} \right)_{s = 1}} \\
 = \frac{{{{\left( {\ln 2} \right)}^2}}}{{\left( {K - 1} \right)!}} \left( {{2^{\cal R}}{{\left( {{\cal R}\ln 2} \right)}^{K - 1}} + \left( {K - 1} \right){2^{\cal R}}{{\left( {{\cal R}\ln 2} \right)}^{K - 2}}} \right) \\
 \ge 0.
\end{multline}
Thus ${\mathcal G_{K}}\left(2^{\cal R} \right)$ is an increasing and convex function of $\cal R$, which will greatly facilitate the optimal rate selection with numerous sophisticated convex methods.
%\subsubsection{Impact of Time Correlation}
%In (\ref{eqn:out_pro_hat_G_asy_def}), the term $\ell \left( {\rho,K} \right)$ quantifies the effect of time correlation. It can be proved from (\ref{eqn:ell_def}) that $\ell \left( {\rho,K} \right)$ is a decreasing function of $\rho$. Thus it follows that $\ell \left( {\rho,K} \right) \le \ell \left( {0,K} \right)=1$. Therefore, time correlation has a negative effect on outage probability, which consequently leads to the degradation of system performance.
\section{Numerical Results} \label{sec:num_res}
In this section, analytical results are verified. For illustration, we take systems with $\sigma_1=\cdots=\sigma_K=1$, $P_1=\cdots=P_K=P_T$, $\delta =1 $ and $R=2~\rm bps/Hz$ as examples. Notice that the exact outage probability in (\ref{eqn:out_prob_def_hat}) is calculated approximately by truncating the infinite series into a finite series with the constraint of $\sum\nolimits_{k = 1}^K {{\ell _k}} \le N$. In the following numerical analysis, the truncation order is set as $N=3$.

%To verify the effectiveness of truncation approach for the computation of $p_{out,K}$, the approximated outage probability $\tilde p_{out,K}$ is plotted against the truncation order $N$ with $K=4$ in Fig. \ref{fig:trun}. It is readily seen that $\tilde p_{out,K}$ with $N=3$ is enough to achieve a good approximation of $p_{out,K}$.
%To show the effect of truncation order $N$, the approximated outage probability after truncation $\tilde{\mathcal P}_{out}(K)$ is plotted versus $N$ with $K=4$ in Fig. \ref{fig:out_asy}. It is readily found that truncation order of $N=5$ is enough to well approximate ${\mathcal P}_{out}(K)$ with negligible error. In addition, low truncation order $N$ is sufficient to achieve a good approximation of ${\mathcal P}_{out}(K)$ under high SNR regime or low $\rho$. For example, the truncation order of $N=2$ can achieve a good approximation when $\rho=0.5$ or $P_T=10\rm dB$.
%\begin{figure}
%  \centering
%  % Requires \usepackage{graphicx}
%  \includegraphics[width=3in]{./fig/truncat.eps}\\
%  \caption{Effect of truncation order $N$.}\label{fig:trun}
%\end{figure}

%Thus the default truncation order is set as $N=3$ in the rest of paper.
In Fig. \ref{fig:div}, the exact and asymptotic outage probabilities are plotted against transmit power $P_T$ by setting $K=4$. Clearly, the exact results perfectly match with the simulation results and the asymptotic results coincide well with the exact/simulation results under high transmit power $P_T$, which justify the correctness of our analysis. For comparison, the outage probabilities obtained by using polynomial fitting technique \cite{shi2015analysis} are also presented in Fig. \ref{fig:div}. It can be seen that polynomial fitting technique does not provide a good approximation under high SNR regime due to the inability to guarantee the pointwise convergence. Thus this comparison further highlights the significance of the asymptotic outage analysis. In addition, note that diversity order quantifies the slope of outage probability against transmit power in a log-log scale. It is observed from Fig. \ref{fig:div} that curves under different time correlation become parallel as $P_T$ increases, which is consistent with our conclusion that time correlation does not influence diversity order.
\begin{figure}
  \centering
  % Requires \usepackage{graphicx}
  \includegraphics[width=2.3in]{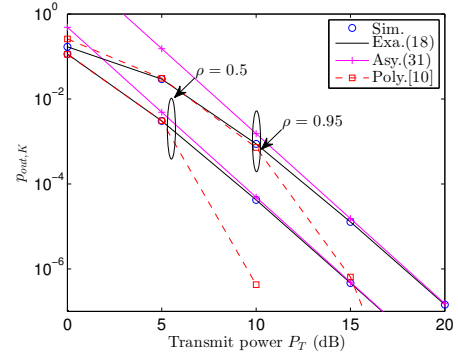}\\
  \caption{Outage probability versus transmit power $P_T$.}\label{fig:div}
\end{figure}

As shown in Fig. \ref{fig:coding}, $\mathcal G_K(2^{\cal R})$ is plotted against transmission rate $\cal R$ under different maximum numbers of transmissions $K$. As expected, $\mathcal G_K(2^{\cal R})$ is an increasing and convex function of $\cal R$, which justifies the correctness of our analysis. %Furthermore, it is also found that $\mathcal G_K(2^{\cal R})$ decreases with the increase of $K$.
\begin{figure}
  \centering
  % Requires \usepackage{graphicx}
  \includegraphics[width=2.3in]{./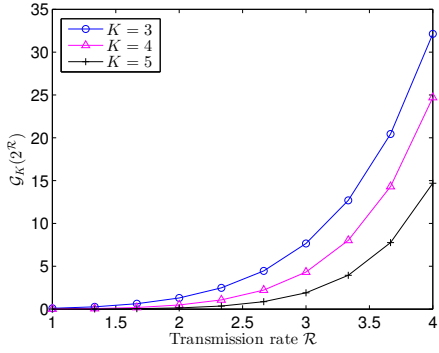}\\
  \caption{Impact of transmission rate.}\label{fig:coding}
\end{figure}

Fig. \ref{fig:corr} illustrates the effect of time correlation on the outage probability of HARQ-IR by setting $P_T=10$dB. It is readily found that the simulation results agree with the analytical results very well. In addition, it is verified that time correlation adversely affects the outage performance.
\begin{figure}
  \centering
  % Requires \usepackage{graphicx}
  \includegraphics[width=2.3in]{./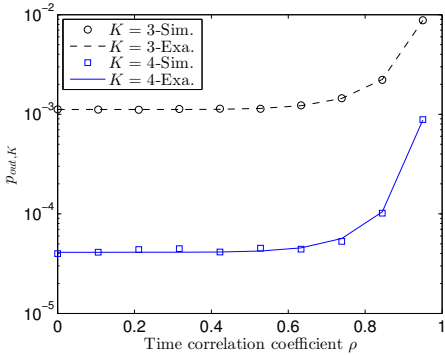}\\
  \caption{Impact of time correlation.}\label{fig:corr}
\end{figure}

\section{Conclusions} \label{sec:con}
In this paper, exact and asymptotic outage probabilities of HARQ-IR over time-correlated Rayleigh fading channels have been analyzed by developing a general analytical approach. In particular, the asymptotic outage probability has been derived in a simple form with which the impacts of time correlation, transmission rate and transmit powers are clearly quantified. %It has been revealed that low time correlation is beneficial to the outage performance of HARQ-IR and full time diversity can be achieved even under time-correlated fading channels.
The special form of asymptotic outage probability also eases the optimal system design, e.g., optimal power allocation and optimal rate selection.

\section{Acknowledgements}
This work was supported in part by National Natural Science Foundation of China under grants 61601524 and 61671488, in part by the Special Fund for Science and Technology Development in Guangdong Province under Grant No. 2016A050503025, in part by the Research Committee of University of Macau under grants MYRG2014-00146-FST and MYRG2016-00146-FST, and in part by the Macau Science and Technology Development Fund under grants 091/2015/A3 and 020/2015/AMJ.
\bibliographystyle{ieeetran}
\bibliography{closed_form}

\end{document}